\documentclass[conference]{IEEEtran}

\usepackage{cite}
\usepackage{amsmath,amssymb,amsthm}
\usepackage{microtype}

\usepackage{booktabs}
\usepackage{tabularx}
\usepackage{array}
\usepackage{multirow}

\usepackage{graphicx}
\graphicspath{{figures/}}

\usepackage{balance}
\usepackage{url}
\usepackage{algorithm}
\usepackage{algpseudocode}
\usepackage{tikz}
\usetikzlibrary{arrows.meta, positioning, fit, backgrounds, calc}

\usepackage[dvipsnames]{xcolor}

\definecolor{linkblue}{RGB}{0, 83, 156}    
\definecolor{citegreen}{RGB}{0, 120, 80}   
\definecolor{urlcolor}{RGB}{180, 50, 20}   

\usepackage[
  colorlinks = true,
  linkcolor  = linkblue,
  citecolor  = red,
  urlcolor   = red,
  bookmarks  = true,
  pdftitle   = {Query-Aware Token Budgeting for Efficient
                Late-Interaction Visual Document Retrieval},
  pdfauthor  = {PS Rishi; Rajeev Ranjan Dwivedi; Vinod Kumar Kurmi},
]{hyperref}

\newtheorem{proposition}{Proposition}
\newcommand{\R}{\mathbb{R}}
\newcommand{\MaxSim}{\operatorname{MaxSim}}
\newcommand{\argmax}{\operatorname*{arg\,max}}
\newcommand{\ndcg}{nDCG@5}

\begin{document}

\title{Query-Aware Token Budgeting for Efficient Late-Interaction Visual Document Retrieval}

\author{
\IEEEauthorblockN{
PS Rishi\textsuperscript{*},
Rajeev Ranjan Dwivedi\textsuperscript{*\textdagger},
Vinod Kumar Kurmi
}
\IEEEauthorblockA{
Indian Institute of Science Education and Research Bhopal (IISER Bhopal), India\\
\texttt{rishi.sivakumar10@gmail.com},
\texttt{rajeevias95@gmail.com},
\texttt{vinodkk@iiserb.ac.in}\\
\textsuperscript{*}Equal contribution.
\textsuperscript{\textdagger}Corresponding author.
}
}

\maketitle

\begin{abstract}
 Late-interaction visual document retrievers preserve fine-grained page evidence by storing many token embeddings per page, but the resulting storage and query-time interaction costs make large-scale deployment expensive. Pooling document tokens before indexing offers a natural remedy, yet static pooling must decide which visual evidence to preserve before the query is known. We study an alternative: a heavily compressed hot-path index generates candidates, after which query-aware token budgeting operates on the original token sets of the shortlisted pages. We formulate this stage-two selection as a budgeted MaxSim coverage problem, show that a clipped version is monotone submodular, and compare coverage-only, cluster-guided, token-wise, and marginal-gain policies. On ten ViDoRe tasks with ColModernVBERT, direct static pooling reduces macro normalized discounted cumulative gain at rank five from 0.6309 without compression to 0.4738 at a thirty-two-fold pool factor. Under the same candidate-generation regime and a pool-factor-eight-equivalent reranking budget, token top-k recovers 93.93 percent of the full-token score, while greedy marginal-gain selection recovers 98.39 percent. Held-out and leave-one-dataset-out evaluations yield positive greedy improvements over token top-k on every dataset. The latency analysis reveals two useful operating points: token top-k for interactive retrieval and the naive greedy implementation as a quality upper envelope. Together, these results show that late-interaction visual retrieval benefits from query-aware allocation rather than query-agnostic pooling alone.
\end{abstract}

\begin{IEEEkeywords}
visual document retrieval, late interaction, token selection, submodular optimization, efficient retrieval
\end{IEEEkeywords}

\section{Introduction}

Document retrieval is increasingly applied to corpora whose meaning is not captured by text alone: forms, scientific papers, financial reports, charts, invoices, slide decks, government documents, and multilingual technical pages. In such corpora, a page is a visual object whose semantics are distributed across words, tables, figures, typography, layout, and spatial grouping. Text-only pipelines based on optical character recognition, layout parsing, captioning, and chunking---including dedicated document-AI models such as LayoutLM \cite{xu2020layoutlm} and OCR-free transformers such as Donut \cite{kim2022donut}---remain useful, but they can discard or distort evidence that is readily visible on the page.

Visual document retrievers address this limitation by embedding pages directly. ColPali combines visual page embeddings with a ColBERT-style MaxSim operator and outperforms conventional OCR-based pipelines on the ViDoRe benchmark \cite{colpali,khattab2020colbert}. Complementary approaches include single-vector Document Screenshot Embedding \cite{ma2024dse} and the visual retrieval-augmented generation pipeline VisRAG \cite{yu2025visrag}, both of which encode page screenshots without parsing. ModernVBERT and ColModernVBERT further show that compact retrieval-oriented vision-language encoders, built on contrastive image--text pretraining with sigmoid losses similar to SigLIP \cite{zhai2023siglip}, can retain much of this retrieval quality at lower model cost \cite{modernvbert}. Although these models simplify visual retrieval, they also expose a systems bottleneck: each page is represented by hundreds of token vectors rather than a single vector.

The cost is structural. For query token embeddings $Q=\{q_i\}_{i=1}^m$ and document token embeddings $D=\{d_j\}_{j=1}^n$, a late-interaction retriever scores
\begin{equation}
\MaxSim(Q,D) = \sum_{i=1}^{m}\max_{1\leq j\leq n} q_i^\top d_j .
\label{eq:maxsim}
\end{equation}
The maximum over document tokens makes the model effective on localized evidence such as table cells, captions, and figure labels, but it also makes storage and scoring expensive. In our ColModernVBERT setting, a 768-pixel page has roughly 350 document tokens. A million-page index therefore requires hundreds of millions of token vectors even before query-time computation is considered. Fig.~\ref{fig:teaser} sketches the alternative we study: retain only the document tokens that fall in the query's feature-context region.

\begin{figure}
    \centering
    \includegraphics[width=1\linewidth]{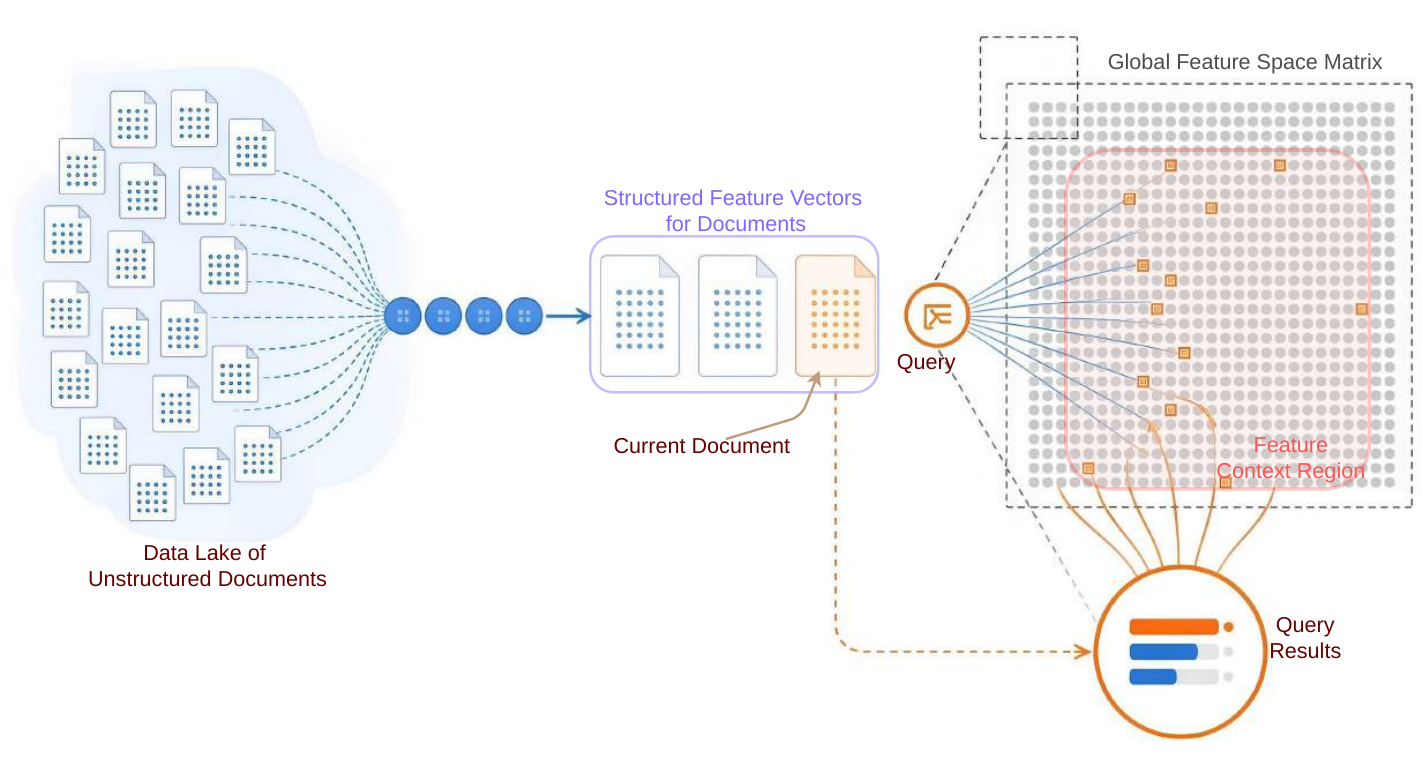}     \vspace{-2em}
    \caption{Query-aware token budgeting. Pages in a document corpus are encoded as structured token vectors. At query time, only the tokens that fall in the query's feature-context region are retained for late-interaction scoring, rather than scoring every token on every page.}
    
    \label{fig:teaser}
    \vspace{-1em}
\end{figure}

A common response is static token pooling, which clusters or aggregates page tokens at indexing time and stores only their representatives \cite{clavie2024pooling}. Related approaches include token pruning that discards uninformative embeddings \cite{lassance2021studytoken,acquavia2023pruning} and product-quantization-based engines that compress token residuals after centroid clustering \cite{santhanam2021colbertv2,santhanam2022plaid,nardini2024emvb}. Static pooling is attractive because it reduces the full index, but a single compressed representation must then serve every future query. This constraint is poorly aligned with MaxSim: the token that matters for a query need not be globally salient on the page, but only the best match for a particular query term. If pooling merges or removes that token, the query-time maximum can change even when the compressed representation remains geometrically plausible.

We therefore study a different allocation of computation that decouples corpus-scale candidate generation from query-specific evidence preservation. Stage one uses a strongly compressed index to retrieve a small candidate set, in the spirit of two-stage neural ranking pipelines that pair an efficient first retriever with a more expensive verifier \cite{nogueira2019passage}. Stage two revisits the original token set only for those candidate pages and selects a fixed budget of tokens for final MaxSim scoring
(Fig.~\ref{fig:pipeline}). The problem becomes: given a query, a candidate page, and a token budget, which document tokens should be retained?

This perspective turns compression into a data mining problem over multi-vector representations: selecting a compact subset of document tokens that collectively covers the query under a strict budget. It also creates a natural connection to submodular coverage. A token that covers an already-represented query aspect has less marginal value than one that covers a missing aspect. Budgeted selection should therefore reward complementarity as well as independent relevance, echoing the use of submodular coverage and maximal marginal relevance in classical information retrieval and summarization \cite{carbonell1998mmr,lin2011submodular,krause2014submodular}.

The greedy submodular algorithm itself is classical. Our contribution is to expose query-conditioned selection within shortlisted visual pages as a MaxSim coverage problem, integrate it with a compressed first-stage index, and quantify the resulting quality--latency frontier under matched token budgets.

We make the following contributions.

\begin{itemize}
    \item We formulate query-aware token budgeting for late-interaction visual document retrieval as a stage-two MaxSim coverage problem under a per-page cardinality budget.
    \item We analyze the clipped token-selection objective as a monotone submodular function and apply classical greedy maximization to obtain a redundancy-aware selector.
    \item We compare random, uniform, cluster-guided Budget-Constrained Reranking (BCR), token top-$k$, and greedy selection under matched budgets on ten ViDoRe tasks using ColModernVBERT.
    \item We report quality, recovery, held-out generalization, leave-one-dataset-out transfer, and latency measurements, showing a clear frontier between token top-$k$ and greedy selection.
\end{itemize}

The main empirical result is that static compression alone is insufficient. Direct pooling from roughly 350 tokens per page to approximately 11 tokens per page reduces macro \ndcg{} from 0.6309 to 0.4738. Under the same compressed first-stage regime, a pool-factor-eight-equivalent second-stage budget recovers most of this loss: token top-$k$ reaches 0.5926, while greedy marginal-gain selection reaches 0.6208, or 98.39 percent of the uncompressed baseline.

\begin{figure*}
    \centering
    \includegraphics[width=1\textwidth]{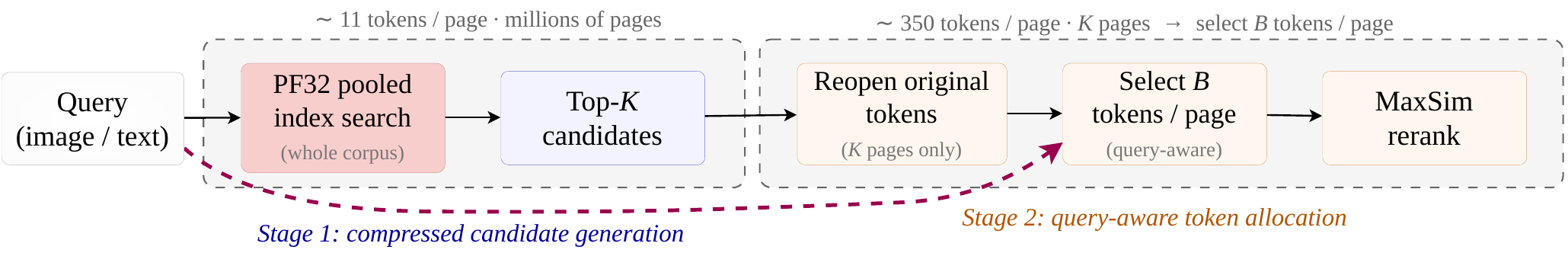}
\caption{Two-stage retrieval pipeline. Stage~1 (blue) searches a PF32-pooled index across the full corpus to retrieve $K$ candidate pages, storing ${\approx}11$ tokens per page and reducing hot-path
storage and interaction cost by $32{\times}$. Stage~2 (orange)
reopens the original ${\approx}350$-token representations for those
$K$ pages and applies query-aware token budgeting to select a budget
of $B$ tokens per page before final MaxSim reranking. The key
asymmetry is that Stage~1 is query-agnostic and corpus-wide, while
Stage~2 is query-specific and restricted to the shortlist.}
  \label{fig:pipeline}
\end{figure*}

\section{Related Work}

\subsection{Text and dense retrieval}
Classical information retrieval relies on sparse lexical signals such as TF-IDF and BM25 \cite{sparckjones1972,robertson1994okapi}. Neural bi-encoders such as Sentence-BERT and Dense Passage Retrieval encode queries and passages into single dense vectors, enabling efficient approximate nearest-neighbor search with libraries such as FAISS \cite{reimers2019sentence,karpukhin2020dense,johnson2019faiss}. Learned sparse retrievers such as SPLADE recover lexical controllability while remaining trainable end-to-end \cite{formal2021splade}. To recover finer-grained relevance signals at a manageable cost, BERT-style cross-encoders are commonly used to rerank a small candidate set produced by a faster first stage \cite{nogueira2019passage}. These models are scalable but can blur fine-grained evidence inside a document. Retrieval-augmented generation further raises the stakes: generation quality depends on whether the retriever surfaces the right evidence in the first place \cite{lewis2020rag}.

\subsection{Late interaction}
Late-interaction retrieval, exemplified by ColBERT, stores a set of contextual token embeddings for each document and scores a query by summing each query token's best document-token match \cite{khattab2020colbert}. ColBERTv2 introduced residual compression and a denoised supervision strategy \cite{santhanam2021colbertv2}, while PLAID further accelerated retrieval with centroid interaction and centroid pruning \cite{santhanam2022plaid}. EMVB augments PLAID with optimized bit-vector prefiltering and SIMD-accelerated centroid interaction \cite{nardini2024emvb}. XTR trains the encoder to surface important document tokens early \cite{lee2023xtr}, and ColBERTer reduces the stored vector count through whole-word aggregation and pruning \cite{hofstatter2022colberter}. Most directly, Col-Bandit adaptively prunes query--document interaction entries at query time without modifying the index \cite{pony2026colbandit}. Our work instead chooses a budgeted subset of original tokens within each shortlisted page before final MaxSim reranking; the two approaches reduce computation at complementary points.

\subsection{Visual document retrieval}
Vision-language models such as CLIP and SigLIP established shared image-text embedding spaces \cite{radford2021clip,zhai2023siglip}, but document retrieval requires finer visual grounding than natural-image retrieval. Earlier document-AI work approached this via OCR-then-language-model pipelines: LayoutLM jointly pretrains text and layout for document image understanding \cite{xu2020layoutlm}, and Donut removes OCR entirely with an end-to-end encoder-decoder \cite{kim2022donut}. ColPali introduced direct visual page retrieval with late interaction and the ViDoRe benchmark, showing that page images can be retrieved without lossy OCR-first pipelines \cite{colpali}. Document Screenshot Embedding pursues a similar paradigm with a single-vector encoder \cite{ma2024dse}, while VisRAG extends visual retrieval into retrieval-augmented generation by directly embedding pages as images \cite{yu2025visrag}. ModernVBERT and ColModernVBERT reduce the model-size burden while preserving late-interaction document retrieval performance \cite{modernvbert}. These models shift the deployment bottleneck from encoder size to token storage and MaxSim scoring, motivating our focus on token budgets.

\subsection{Token reduction and coverage selection}
Token reduction methods include hard pruning, clustering, pooling, and merging. In efficient transformer inference, EViT prunes uninformative vision tokens \cite{liang2022evit}, DynamicViT learns a token sparsification policy end-to-end \cite{rao2021dynamicvit}, and ToMe merges similar tokens without retraining \cite{bolya2023tome}. PixelPrune removes redundant image patches before visual encoding, while QuoTA assigns visual tokens according to a query for long-video comprehension \cite{wang2026pixelprune,luo2026quota}; these methods operate at different stages and on different tasks from late-interaction document reranking. In multi-vector retrieval, token pruning preserves high-value embeddings \cite{lassance2021studytoken,acquavia2023pruning}, while clustering-based pooling merges similar tokens and gives the pool-factor (PF) terminology used here \cite{clavie2024pooling}. PLAID and EMVB instead store centroid identifiers and low-bit residuals \cite{santhanam2022plaid,nardini2024emvb}. These query-agnostic methods cannot know which evidence a later query will need; we use pooling for candidate generation and shift fine-grained allocation to query time.

The distinction is not simply whether tokens are removed, but when the decision is made and what information is available at that point. Index-time reduction can be amortized over all queries and directly shrinks corpus-wide search, but it must preserve a representation suitable for an unknown query distribution. Query-time reduction has access to the current query and can retain otherwise inconspicuous evidence, although applying it across the entire corpus would be prohibitively expensive. Our two-stage design separates these roles: static pooling supplies the corpus-scale filter, and query-aware selection is reserved for the much smaller candidate set. This division preserves the principal efficiency advantage of static compression while allowing the final representation to depend on the information need.

Submodular optimization provides a principled language for coverage under budgets. For monotone submodular maximization with a cardinality constraint, greedy selection obtains the classical $(1-1/e)$ approximation guarantee \cite{nemhauser1978analysis}; broader treatments and applications are surveyed in \cite{krause2014submodular}. Lin and Bilmes used submodular coverage and diversity terms to construct principled extractive summaries \cite{lin2011submodular}, while the closely related Maximal Marginal Relevance criterion of Carbonell and Goldstein combines query relevance with information novelty for diversity-aware reranking \cite{carbonell1998mmr}. In dense retrieval, DISCO casts multi-vector retrieval as collective query coverage and uses submodular structure to guide efficient subset selection at the corpus level \cite{disco}. We bring this coverage perspective inside a single visual document page, where the selected elements are document tokens rather than corpus items. This shift from corpus-level selection to within-page allocation motivates the formulation developed next.

\section{Query-Aware Token Budgeting}

\subsection{Problem setting}
For each candidate page $d$, let $D=\{d_1,\ldots,d_n\}\subset\R^r$ be the original document-token set and let $Q=\{q_1,\ldots,q_m\}\subset\R^r$ be the query-token set. Static pooling \cite{clavie2024pooling} stores a query-agnostic representation $\widetilde{D}_p$ with about $n/p$ tokens, where $p$ is the pool factor. In our two-stage pipeline, stage one searches a PF32 index to obtain a candidate set $C_K(Q)$. Stage two reopens each $d\in C_K(Q)$ and selects a subset $S\subseteq D$ with $|S|\leq B$, where $B$ corresponds to a target pool-factor-equivalent budget such as PF8, as illustrated in Fig.~\ref{fig:pipeline}.

\emph{Storage architecture.} This design maintains two storage tiers. The PF32 pooled index
(${\approx}\,11$ tokens per page) resides on fast storage and handles first-stage retrieval across
the full corpus; relative to PF1, it uses $32{\times}$ fewer token vectors and lowers
first-stage MaxSim interaction FLOPs by the same factor. The full token set (${\approx}\,350$
tokens per page) is retained on slower or cold storage and is loaded only for the $K$ shortlisted
candidates per query. For $N$ pages, a PF1-only store contains about $350N$ vectors, whereas
separate full-token and PF32 tiers contain about $(350+11)N=361N$ vectors, or $1.03{\times}$
the PF1 vector count before precision and metadata are considered. Thus, the $32{\times}$ claim
applies to the hot-path Stage-1 index, not total system storage. Stage 2 reads up to $350K$
full-token vectors per query (17,500 at $K=50$ and 35,000 at $K=100$) before selection. This
two-tier cost is the price of recovering evidence that static pooling permanently discards. The ideal stage-two subset maximizes the retained MaxSim score:
\begin{equation}
S^*(Q,D;B)=\argmax_{S\subseteq D,\, |S|\leq B}\sum_{i=1}^{m}\max_{d\in S}q_i^\top d .
\label{eq:budgeted}
\end{equation}
This objective is query-aware because the selected subset changes with $Q$. It is also coverage-oriented: a good subset must support all informative query tokens, not merely contain individually high-scoring page tokens.

The formulation separates the token budget from the number of tokens originally produced by the encoder. For a fixed $B$, every shortlisted page receives the same maximum interaction budget at final scoring, making policies comparable by the evidence they preserve rather than by the number of vectors they retain. The selected subset is also allowed to differ across queries for the same page. A token representing a table header, axis label, or localized phrase may be dispensable for one query yet decisive for another; committing to a single reduced representation at indexing time cannot express this conditional relevance. Query-aware budgeting instead treats the original page representation as a reservoir of evidence and allocates the limited reranking capacity to the parts most useful for the current query.

The two stages consequently address different sources of retrieval error. Candidate generation determines which pages remain eligible for reranking; once a relevant page is absent from $C_K(Q)$, no stage-two policy can recover it. Conditional on the same candidate set, however, the selection problem isolates how effectively a policy preserves the evidence needed to order those candidates. Using a common PF32 first stage and matched reranking budgets therefore separates improvements in token allocation from improvements caused by searching more pages or scoring more final-stage vectors. This separation is central to the empirical comparisons below.

\subsection{Selection policies}
We compare five policies that differ in query awareness and redundancy awareness. \emph{Random reopening} spends the budget without using the query or page structure. It is a diagnostic lower bound. \emph{Uniform reopening} spreads the budget broadly across the compressed representation or token groups. It is query-agnostic but coverage-preserving, and therefore a stronger baseline than random selection. \emph{Budget-Constrained Reranking} (BCR) is a diagnostic intermediate between uniform and token-wise selection: a coverage ratio $\rho\in[0,1]$ reserves approximately $\rho B$ tokens for broad, uniform coverage, and the remaining budget reopens groups whose members have stronger query affinity, with the final set capped at $B$ tokens. BCR isolates whether coarse group-level query guidance is sufficient; it is not intended as a state-of-the-art literature baseline. \emph{Token top-$k$} scores each original document token independently by its best query-token similarity,
\begin{equation}
\sigma(d\mid Q)=\max_{1\leq i\leq m} q_i^\top d,
\label{eq:topk}
\end{equation}
then keeps the $B$ highest-scoring tokens. This is fast and strongly query-aware, but it does not penalize redundancy: several selected tokens may cover the same query aspect, much as classical relevance-only rerankers select redundant documents before MMR-style diversification \cite{carbonell1998mmr}.

\emph{Greedy marginal-gain selection} directly optimizes coverage. Starting with $S_0=\emptyset$, at step $t$ it chooses
\begin{equation}
d^*_{t+1}=\argmax_{d\in D\setminus S_t}\left[F_Q(S_t\cup\{d\})-F_Q(S_t)\right],
\label{eq:greedy}
\end{equation}
where $F_Q(S)$ is the MaxSim-style coverage utility defined below. This method is redundancy-aware because a token receives credit only for the query-token coverage it adds beyond the current set.

Token top-$k$ and greedy therefore answer related but different questions. Token top-$k$ asks which document tokens look most relevant in isolation. Greedy asks which token is most useful after accounting for what the current subset already explains. Consider a query with several semantic or visual aspects. Independent scoring may devote much of the budget to near-duplicate tokens that all match the same dominant aspect, leaving weaker but complementary evidence uncovered. Marginal-gain selection suppresses this repetition: once an aspect is well covered, another similar token contributes little, allowing the budget to move to a different query aspect. This interpretation is especially natural for document pages, where repeated patches, neighboring text tokens, and visually similar regions can create many individually strong but redundant matches.

Algorithms~\ref{alg:greedy} and~\ref{alg:topk} give complete pseudocode for the two primary policies.
Table~\ref{tab:complexity} summarizes per-page computational costs; the observed stage-two latency
ratio of ${\approx}\,34{\times}$ for naive greedy over Token top-$k$ (Table~\ref{tab:design}) is consistent
with the $O(B)$ multiplicative factor at $B\approx 45$.

\begin{algorithm}[t]
\caption{Greedy Marginal-Gain Token Selection}
\label{alg:greedy}
\begin{algorithmic}[1]
\Require Query tokens $Q=\{q_i\}_{i=1}^{m}$; document tokens $D=\{d_j\}_{j=1}^{n}$; budget $B$
\Ensure $S\subseteq D$ with $|S|\leq B$
\State $\phi_i(d)\leftarrow\max\{0,\,q_i^\top d\}$ for all $i,d$ \Comment{clipped similarity}
\State $S\leftarrow\emptyset$;\quad $c_i\leftarrow 0$ for all $i$ \Comment{per-query-token coverage}
\For{$t=1,\ldots,B$}
  \For{each $d\in D\setminus S$}
    \State $\Delta(d)\leftarrow\sum_{i=1}^{m}\max\!\bigl\{0,\;\phi_i(d)-c_i\bigr\}$
  \EndFor
  \State $d^*\leftarrow\argmax_{d\in D\setminus S}\;\Delta(d)$ \Comment{ties follow token order}
  \If{$\Delta(d^*)=0$}
    \State \textbf{break} \Comment{clipped coverage has saturated}
  \EndIf
  \State $S\leftarrow S\cup\{d^*\}$;\quad $c_i\leftarrow\max\{c_i,\,\phi_i(d^*)\}$ for all $i$
\EndFor
\Return $S$
\end{algorithmic}
\end{algorithm}

\begin{algorithm}[t]
\caption{Token Top-$k$ Selection}
\label{alg:topk}
\begin{algorithmic}[1]
\Require Query tokens $Q=\{q_i\}_{i=1}^{m}$; document tokens $D=\{d_j\}_{j=1}^{n}$; budget $B$
\Ensure $S\subseteq D$ with $|S|=B$
\For{each $d\in D$}
  \State $\sigma(d)\leftarrow\max_{1\leq i\leq m}\;q_i^\top d$
\EndFor
\Return top-$B$ elements of $D$ ranked by $\sigma(\cdot)$
\end{algorithmic}
\end{algorithm}

\begin{table}[t]
\centering 
\caption{Per-page time complexity at stage two. $n$: document tokens; $m$: query tokens; $B$: budget; $k$: clusters ($k\approx n/32$ under PF32 indexing). Naive greedy is a factor of $B$ more expensive than token top-$k$; lazy greedy has the same worst-case bound but can be faster in practice.}
\label{tab:complexity}
\begin{tabular}{lll}
\toprule
Method & Time & Space \\
\midrule
Random             & $O(B)$                                        & $O(1)$     \\
Uniform            & $O(n)$                                        & $O(1)$     \\
Token top-$k$      & $O(nm)$                                       & $O(n)$     \\
BCR                & $O(km+n)$                                     & $O(k)$     \\
Greedy (naive)     & $O(Bnm)$                                      & $O(n{+}m)$ \\
Lazy greedy        & $O(Bnm)$ worst case                           & $O(n{+}m)$ \\
\bottomrule
\end{tabular}
\end{table}

\begin{table}[t]
\centering 
\caption{Selection policies compared in the reranking stage.}
\label{tab:methods}
\begin{tabular}{lccc}
\toprule
Policy & Query-aware & Redundancy-aware & Unit \\
\midrule
Random & No & No & token/cluster \\
Uniform & No & Implicit & token/cluster \\
BCR & Yes & Coarse & cluster \\
Token top-$k$ & Yes & No & token \\
Greedy & Yes & Yes & token \\
\bottomrule
\end{tabular}
\end{table}

\subsection{Submodular structure}
The raw inner product in (\ref{eq:maxsim}) may be negative. For the selection objective, we use the standard nonnegative coverage version obtained by including a dummy zero token, equivalently $\phi_i(d)=\max\{0,q_i^\top d\}$. Define
\begin{equation}
F_Q(S)=\sum_{i=1}^{m}\max_{d\in S}\phi_i(d),\qquad F_Q(\emptyset)=0 .
\label{eq:coverage}
\end{equation}
The final reranking score is computed with the selected tokens using the ordinary MaxSim operator.

\begin{proposition}
\label{prop:1}
For a fixed query $Q$, the set function $F_Q(S)$ in (\ref{eq:coverage}) is monotone submodular over document-token subsets. Therefore, greedy selection under a cardinality budget $B$ achieves the classical $(1-1/e)$ approximation guarantee for maximizing $F_Q$.
\end{proposition}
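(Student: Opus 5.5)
The plan is to reduce the claim to the structure of a single query token and then sum. For each $i$, define $f_i(S)=\max_{d\in S}\phi_i(d)$ with $f_i(\emptyset)=0$, so that $F_Q=\sum_{i=1}^m f_i$. Nonnegative sums preserve both monotonicity and submodularity. It therefore suffices to prove that each $f_i$ is monotone submodular, and then to invoke the greedy guarantee of Nemhauser, Wolsey, and Fisher \cite{nemhauser1978analysis}.

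\emph{Normalization and monotonicity.} The clipping $\phi_i(d)=\max\{0,q_i^\top d\}\ge 0$ is what makes the convention $f_i(\emptyset)=0$ consistent. It is equivalent to adjoining a dummy zero token. For $S\subseteq T$, the maximum over the larger set satisfies $f_i(T)\ge f_i(S)$, and when $S=\emptyset$ this follows directly from nonnegativity. Thus $F_Q$ is normalized, nonnegative, and monotone. This is precisely where the raw inner product would fail: with negative similarities, adding the first token to $\emptyset$ could decrease the value under any sensible convention.

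\emph{Submodularity.} Fix $S\subseteq T\subseteq D$ and $x\in D\setminus T$, and write $c_S=f_i(S)$ and $c_T=f_i(T)$, so that $0\le c_S\le c_T$. The marginal gain is
\begin{equation}
f_i(S\cup\{x\})-f_i(S)=\max\{0,\phi_i(x)-c_S\},
\end{equation}
which is exactly the quantity $\Delta$ used in Algorithm~\ref{alg:greedy}. The same expression holds for $T$ with $c_T$ in place of $c_S$. The map $c\mapsto\max\{0,\phi_i(x)-c\}$ is nonincreasing, so the gain at $S$ is at least the gain at $T$. This is the diminishing-returns property. Summing over $i$ gives submodularity of $F_Q$.

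\emph{Approximation guarantee.} Having established that $F_Q$ is normalized, monotone, and submodular, the Nemhauser--Wolsey--Fisher theorem states that the greedy rule (\ref{eq:greedy}), run for $B$ steps under the cardinality constraint $|S|\le B$, returns $S_B$ with $F_Q(S_B)\ge(1-1/e)\max_{|S|\le B}F_Q(S)$. Two details need checking.
\begin{enumerate}
\item The early break in Algorithm~\ref{alg:greedy} when $\Delta(d^*)=0$ is harmless. By submodularity, every later marginal gain is also zero, so padding to $B$ elements would not change the value.
\item Tie-breaking is irrelevant to the guarantee.
\end{enumerate}

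The only real subtlety is the clipping and normalization step. The guarantee applies to the clipped objective $F_Q$, not to the unclipped MaxSim in (\ref{eq:budgeted}), and the statement should be read accordingly.
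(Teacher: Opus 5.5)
Your proposal is correct and follows essentially the same route as the paper: decompose $F_Q$ into per-query-token functions $f_i$, show each is monotone and that its marginal gain $\max\{0,\phi_i(x)-f_i(S)\}$ is nonincreasing in $f_i(S)$, sum over $i$, and invoke Nemhauser--Wolsey--Fisher. Your additional remarks are correct refinements that the paper's proof leaves implicit, namely the role of clipping in normalization, the harmlessness of the early break in Algorithm~\ref{alg:greedy}, and the irrelevance of tie-breaking.
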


\begin{proof}
For a single query token $q_i$, define $f_i(S)=\max_{d\in S}\phi_i(d)$ with $f_i(\emptyset)=0$. If $A\subseteq B$, then $f_i(A)\leq f_i(B)$, so $f_i$ is monotone. For any token $x\notin B$, the marginal gain is
\begin{equation}
f_i(A\cup\{x\})-f_i(A)=\max\{0,\phi_i(x)-f_i(A)\}.
\end{equation}
Since $f_i(A)\leq f_i(B)$, we have $\phi_i(x)-f_i(A)\geq\phi_i(x)-f_i(B)$, and therefore $\max\{0,\phi_i(x)-f_i(A)\}\geq\max\{0,\phi_i(x)-f_i(B)\}$, i.e., $f_i(A\cup\{x\})-f_i(A)\geq f_i(B\cup\{x\})-f_i(B)$. This is the submodularity (diminishing-returns) condition, so $f_i$ is submodular. A nonnegative sum of monotone submodular functions is monotone submodular, so $F_Q=\sum_i f_i$ is monotone submodular. The greedy guarantee follows from Nemhauser et al. \cite{nemhauser1978analysis}; see also the survey of submodular function maximization in \cite{krause2014submodular}.
\end{proof}

The proposition clarifies the difference between token top-$k$ and greedy selection. Token top-$k$ estimates token salience independently. Greedy estimates marginal coverage, which is the relevant quantity under MaxSim when tokens can be redundant.

\emph{Remark (selection vs.\ scoring objective).} Algorithm~\ref{alg:greedy} maximizes the
clipped objective $F_Q$ defined in~\eqref{eq:coverage}, for which the submodular guarantee
holds, and stops once no remaining token has positive marginal gain. Final reranking in
Tables~\ref{tab:design}--\ref{tab:latency} uses the standard MaxSim operator~\eqref{eq:maxsim},
which permits negative inner products. Clipping prevents negative similarities from creating
positive selection utility, but it does not make the two objectives identical: a selected token
may still have negative inner products with individual query tokens. Accordingly, the
$(1-1/e)$ guarantee applies to $F_Q$, not directly to ordinary MaxSim or retrieval nDCG.

\section{Experimental Setup}

\subsection{Benchmark and model}
We evaluate on ten ViDoRe tasks \cite{colpali}: DocVQA \cite{mathew2021docvqa}, InfoVQA \cite{mathew2022infographicvqa}, TAT-DQA \cite{zhu2022tatdqa}, ArXivQA, TabFQuAD, Energy, Government, Healthcare, Artificial Intelligence, and Shift Project. Together, these tasks span scanned industrial documents, infographics, financial tables, scientific figures, web tables, government and medical pages, energy reports, AI-related scientific pages, and French environmental reports. Their query counts range from 100 for the practical tasks to 1600 for TAT-DQA.

All experiments use ColModernVBERT \cite{modernvbert} as the base visual retriever. The model preserves ColPali's multi-vector late-interaction structure while using a compact 250M-parameter encoder. In the comparison reported with ModernVBERT, ColModernVBERT obtains a ViDoRe average of 68.6 with a CPU query-encoding latency of 0.032 seconds, compared with 69.2 and 0.222 seconds for ColPali \cite{modernvbert}. This balance shifts the remaining efficiency challenge toward token storage and interaction.

\begin{table}[t]
\centering
\caption{ViDoRe tasks used in the evaluation~\cite{colpali}.
         Query and document counts follow the ColPali protocol.
         The ten tasks span scanned industrial pages, infographics, financial
         and web tables, scientific figures, and multilingual reports.
         DocVQA~\cite{mathew2021docvqa}, InfoVQA~\cite{mathew2022infographicvqa},
         and TAT-DQA~\cite{zhu2022tatdqa} have dedicated dataset papers.}
\label{tab:datasets}
\begin{tabular}{lrrl}
\toprule
Dataset & Queries & Docs & Characteristic \\
\midrule
DocVQA & 500 & 500 & scanned industrial docs \\
InfoVQA & 500 & 500 & web infographics \\
TAT-DQA & 1600 & 1600 & financial tables \\
ArXivQA & 500 & 500 & scientific figures \\
TabFQuAD & 210 & 210 & French web tables \\
Energy & 100 & 1000 & energy reports \\
Government & 100 & 1000 & administrative pages \\
Healthcare & 100 & 1000 & medical pages \\
AI & 100 & 1000 & AI scientific pages \\
Shift Project & 100 & 1000 & French environmental reports \\
\bottomrule
\end{tabular}
\end{table}

\subsection{Budgets and metrics}
\label{sec:budgets}
The uncompressed PF1 representation has about 350 page tokens. Static compression following \cite{clavie2024pooling} stores pooled representations at PF4, PF8, PF16, and PF32, corresponding roughly to 88, 45, 22, and 11 tokens per page. The main two-stage experiments use PF32 for candidate generation and a PF8-equivalent reranking budget unless otherwise stated. The full ten-dataset design-space comparison uses $K=50$ candidates; held-out and latency sensitivity analyses additionally consider $K=20$ and $K=100$.

Expressing the reranking budget through PF-equivalent token counts provides a direct comparison with static pooling. For example, a PF8-equivalent budget allows a query-aware method to score roughly the same number of final-stage tokens as a PF8 pooled representation, while differing only in how those tokens are chosen. The comparison therefore asks whether query conditioning and redundancy control make better use of a fixed interaction budget, rather than giving the proposed policies additional final-stage capacity.

The main quality metric is macro-averaged \ndcg{} over datasets. We also report recovery relative to the uncompressed PF1 score:
\begin{equation}
\mathrm{Recovery}(M)=100\cdot\frac{\ndcg(M)}{\ndcg(\mathrm{PF1})}.
\label{eq:recovery}
\end{equation}
Latency is reported as median stage-two time and total median query time on the measured latency subset. Stage-two time covers token selection and final reranking; total time adds the measured first stage. Cold-storage reads, deserialization, and host-to-device transfer were not profiled separately, so these values characterize the measured retrieval pipeline rather than an end-to-end storage-tier deployment. In all empirical results, ``Greedy'' denotes the naive implementation of Algorithm~\ref{alg:greedy}; lazy evaluation and other accelerations are not measured.

\begin{figure*}[t]
\centering
\includegraphics[width=0.85\textwidth]{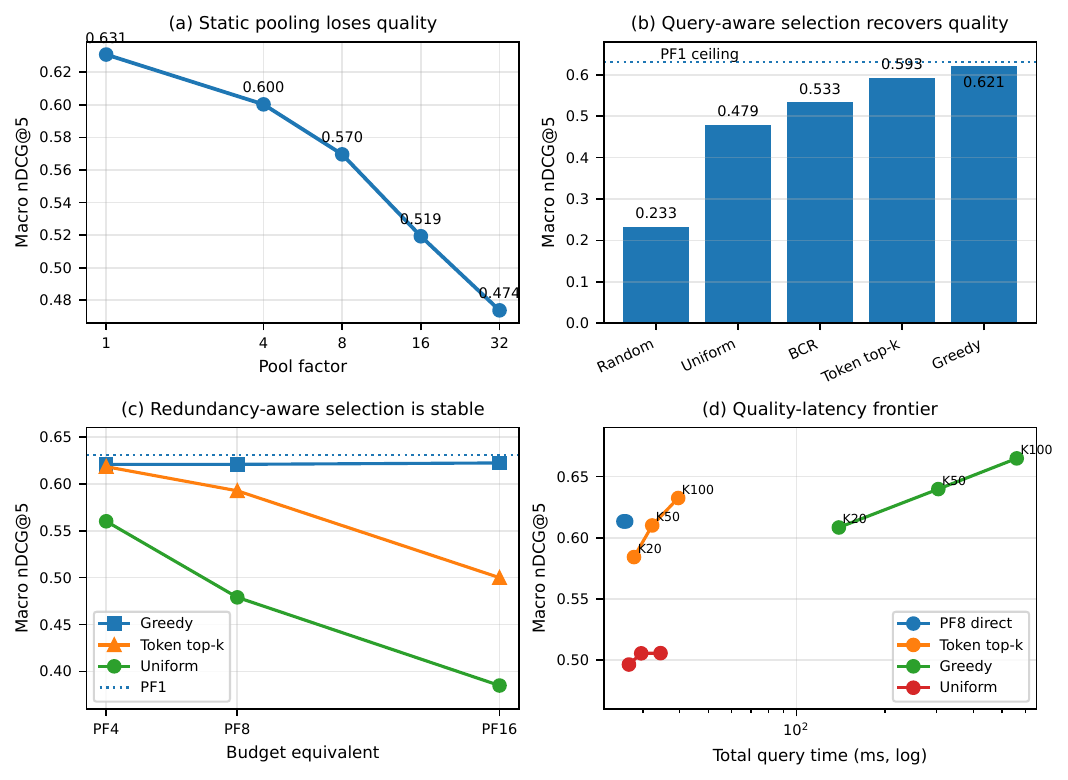}
\caption{Main empirical trends. Static pooling loses quality as compression increases. Query-aware selection recovers most of the PF1 score under a PF8-equivalent budget. Token top-$k$ is the low-latency operating point; naive greedy is the quality envelope.}
\label{fig:overview}
\end{figure*}

\section{Results}

\subsection{Static pooling has a hard ceiling}
Table~\ref{tab:compression} reports the direct-compression baseline. Performance declines monotonically as the pool factor increases: PF8 retains 90.28 percent of PF1, whereas PF32 retains only 75.11 percent. This exposes the central limitation of query-agnostic compression: a representation that is efficient for candidate generation can become too lossy for final ranking. The pattern is consistent with the original findings on clustering-based token pooling \cite{clavie2024pooling}, although our ratios are measured with a visual rather than textual late-interaction backbone.

\begin{table}[t]
\centering
\caption{Direct static pooling on ten ViDoRe tasks (macro nDCG@5).}
\label{tab:compression}
\begin{tabular}{lrrr}
\toprule
Pool factor & Tokens/page & Macro \ndcg{} & Drop vs. PF1 \\
\midrule
PF1  & 350 & 0.6309 & 0.00\% \\
PF4  & 88  & 0.6003 & 4.85\% \\
PF8  & 45  & 0.5696 & 9.72\% \\
PF16 & 22  & 0.5193 & 17.68\% \\
PF32 & 11  & 0.4738 & 24.89\% \\
\bottomrule
\end{tabular}
\end{table}

The left panel of Fig.~\ref{fig:overview} visualizes this trend. The sharp loss at PF32 motivates the two-stage design: use the compressed representation to retrieve candidates efficiently, then return to finer evidence for final ranking, following the broader logic of neural reranking pipelines \cite{nogueira2019passage}.

\subsection{Query-aware token budgeting recovers quality}
Table~\ref{tab:design} compares the selection policies under a PF8-equivalent stage-two budget. Their ordering follows the coverage interpretation: random $<$ uniform $<$ BCR $<$ token top-$k$ $<$ greedy. Random selection performs poorly because an unstructured budget rarely preserves the relevant evidence. Uniform reopening is substantially stronger, showing that broad coverage helps even without query information, and BCR improves further by adding coarse query guidance. Token top-$k$ then benefits from token-level query affinity and provides the practical interactive choice. Naive greedy achieves the highest score by accounting for redundancy through marginal gain, but its ${\approx}34{\times}$ higher stage-two latency makes it a quality reference rather than the default deployment policy.

\begin{table}[t]
\centering
\caption{Design-space comparison at PF8-equivalent budget and $K=50$. Recovery is relative to PF1.}
\label{tab:design}
\scalebox{1.0}{\begin{tabular}{lrrr}
\toprule
Method & Macro \ndcg{} & Recovery & Stage-2 ms \\
\midrule
Random & 0.2332 & 36.97\% & 12.428 \\
Uniform & 0.4790 & 75.93\% & 10.550 \\
BCR ($\rho=0.75$) & 0.5333 & 84.53\% & 30.736 \\
Token top-$k$ & 0.5926 & 93.93\% & 15.832 \\
Naive greedy & 0.6208 & 98.39\% & 538.480 \\
\bottomrule
\end{tabular}}
\end{table}

The improvement from token top-$k$ to greedy is especially informative because both methods use query information at token granularity. Greedy differs by discounting tokens whose query aspects are already covered. The increase from 0.5926 to 0.6208 therefore isolates the value of redundancy control rather than access to additional query information. This behavior parallels the advantage of diversity-aware criteria such as MMR over relevance-only rankers \cite{carbonell1998mmr} and the role of submodular diversity in extractive summarization \cite{lin2011submodular}.

\subsection{Greedy is robust across budgets}
Table~\ref{tab:budget} evaluates PF4-, PF8-, and PF16-equivalent reranking budgets. Greedy remains nearly constant across this range, with scores close to 0.621. Token top-$k$ is competitive at PF4 but declines at PF8 and PF16, while uniform reopening declines more sharply. Independent token affinity therefore works well when the budget is generous enough to absorb redundancy; as the budget tightens, selection by marginal coverage becomes increasingly valuable.

\begin{table}[t]
\centering 
\caption{Selection quality across reranking budgets at $K=50$.}
\label{tab:budget}
\begin{tabular}{lrrrr}
\toprule
Budget & PF1 & Greedy & Top-$k$ & Uniform \\
\midrule
PF4  & 0.6309 & 0.6208 & 0.6183 & 0.5601 \\
PF8  & 0.6309 & 0.6208 & 0.5926 & 0.4790 \\
PF16 & 0.6309 & 0.6223 & 0.5000 & 0.3850 \\
\bottomrule
\end{tabular}
\end{table}

The budget sweep also clarifies that the policies fail in different ways. Uniform reopening loses quality because it cannot direct capacity toward query-relevant regions. Token top-$k$ corrects this mismatch, but a tighter budget makes repeated high-affinity tokens increasingly costly. Greedy remains stable because its objective values a token only through the coverage it adds to the current subset. Thus, the benefit of marginal-gain selection is not merely a better ranking of individual tokens; it is a more effective distribution of limited capacity across the query aspects represented on the page.

\subsection{Cluster-level guidance helps but is limited}
BCR interpolates between uniform coverage and query-guided cluster reopening. Fig.~\ref{fig:bcr} shows that pure query-guided cluster reopening is brittle, while near-pure coverage underuses the query. The best setting is $\rho=0.75$, where most of the budget protects broad coverage and the remaining budget adapts to the query. This confirms that cluster-level query information is useful, but still less flexible than token-level selection.

\begin{figure}[h]
\centering
\includegraphics[width=0.9\columnwidth]{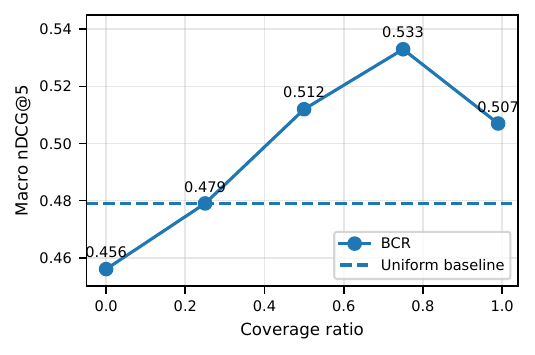}
\caption{BCR coverage-ratio ablation at PF8-equivalent budget and $K=50$. Best at $\rho=0.75$.}
\label{fig:bcr}
\end{figure}

\subsection{Held-out and transfer results}
To assess whether the greedy advantage results from tuning on the full benchmark, we tune on three validation datasets and evaluate on seven held-out datasets. Greedy reaches 0.6460 macro \ndcg{}, compared with 0.6195 for token top-$k$. Relative to the held-out PF1 ceiling of 0.649, these values correspond to 99.54 percent recovery for greedy and 95.46 percent for token top-$k$. Greedy wins on all seven held-out datasets. A Wilcoxon signed-rank test gives $p=0.0156$, a paired $t$ test gives $p=0.0358$, and a bootstrap 95 percent confidence interval for the mean delta is $[0.0096,0.0453]$.

\begin{table}[h]
\centering
\caption{Held-out seven-dataset comparison after tuning on a disjoint validation split.}
\label{tab:heldout}
\begin{tabular}{lrrr}
\toprule
Method & Macro \ndcg{} & Recovery & Wins \\
\midrule
Greedy & 0.6460 & 99.54\% & 7/7 \\
Token top-$k$ & 0.6195 & 95.46\% & 0/7 \\
\bottomrule
\end{tabular}
\end{table}

Fig.~\ref{fig:generalization} shows the per-dataset differences. The largest held-out gains occur on ArXivQA and Shift Project, where relevant evidence can be distributed across visually distinct regions. Leave-one-dataset-out transfer over all ten datasets yields the same qualitative result: greedy wins in every held-out condition, with a mean delta of 0.0282, a minimum of 0.0026, and a maximum of 0.0603.

The two evaluations provide complementary evidence. The fixed held-out split tests whether the observed advantage persists after choices are made away from the evaluation datasets, while leave-one-dataset-out transfer asks whether the direction of the improvement depends on any single benchmark domain. The gains vary in magnitude, as expected from the visual and semantic diversity of ViDoRe, but their consistently positive direction supports the underlying mechanism: redundancy-aware coverage remains useful across pages whose relevant evidence is organized in markedly different ways.

\begin{figure*}[t]
\centering
\includegraphics[width=0.8\textwidth]{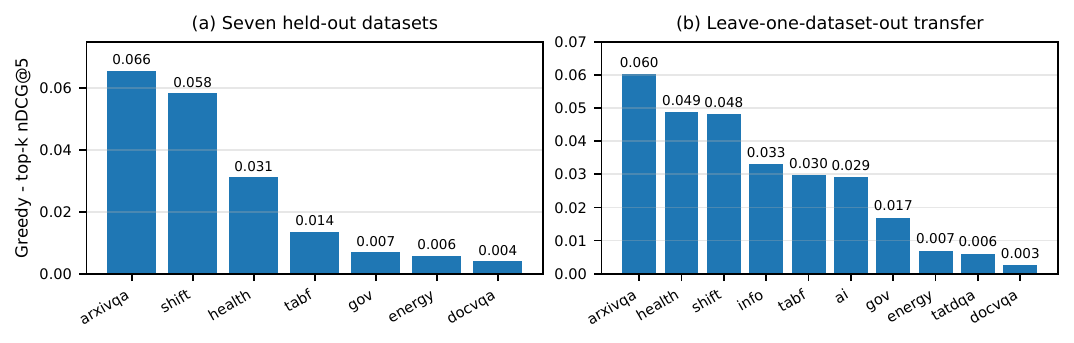}
\caption{Generalization of greedy over token top-$k$. (a)~All seven held-out deltas are positive. (b)~Leave-one-dataset-out transfer is positive on all ten datasets.}
\label{fig:generalization}
\end{figure*}

\begin{table}[h]
\centering
\caption{Per-dataset greedy versus token top-$k$ deltas. Dashes mark validation-split datasets.}
\label{tab:perdataset}
\small
\scalebox{0.80}{\begin{tabular}{lrrr rrr}
\toprule
Dataset & \multicolumn{3}{c}{Held-out protocol} &
          \multicolumn{3}{c}{Leave-one-out transfer} \\
\cmidrule(lr){2-4}\cmidrule(lr){5-7}
 & Greedy & Top-$k$ & $\Delta$ & Greedy & Top-$k$ & $\Delta$ \\
\midrule
ArXivQA~\cite{colpali}         & 0.7044 & 0.6388 & 0.0656 & 0.6946 & 0.6343 & 0.0603 \\
Healthcare~\cite{colpali}      & 0.7939 & 0.7628 & 0.0311 & 0.8021 & 0.7533 & 0.0488 \\
Shift Project~\cite{colpali}   & 0.5574 & 0.4991 & 0.0584 & 0.5372 & 0.4891 & 0.0482 \\
InfoVQA~\cite{mathew2022infographicvqa} & -- & -- & -- & 0.6761 & 0.6429 & 0.0331 \\
TabFQuAD~\cite{colpali}        & 0.4202 & 0.4066 & 0.0136 & 0.4149 & 0.3852 & 0.0297 \\
AI~\cite{colpali}              & --     & --     & --     & 0.8427 & 0.8136 & 0.0291 \\
Government~\cite{colpali}      & 0.7790 & 0.7720 & 0.0069 & 0.7993 & 0.7824 & 0.0169 \\
Energy~\cite{colpali}          & 0.8133 & 0.8076 & 0.0057 & 0.8016 & 0.7946 & 0.0070 \\
TAT-DQA~\cite{zhu2022tatdqa}   & --     & --     & --     & 0.1967 & 0.1906 & 0.0061 \\
DocVQA~\cite{mathew2021docvqa} & 0.4540 & 0.4500 & 0.0040 & 0.4424 & 0.4398 & 0.0026 \\
\bottomrule
\end{tabular}}
\end{table}

\subsection{Latency-quality trade-off}
Table~\ref{tab:latency} reports the measured latency profile on the three-dataset latency subset at $K=100$. Naive greedy gives the best quality but recomputes marginal gains sequentially, taking 538.480 ms in Stage 2 and 562.443 ms in total. Token top-$k$ takes 15.832 ms in Stage 2 and 39.776 ms in total, a roughly $34{\times}$ reduction in selection-stage latency while improving over PF8-direct on this subset. Thus, token top-$k$ is the practical interactive default; naive greedy is a latency-tolerant quality ceiling and a possible teacher for future approximate selectors.

\begin{table}[t]
\centering
\caption{Latency-quality trade-off on the three-dataset latency subset at $K=100$. Total time includes both stages.}
\label{tab:latency}
\scalebox{1.0}{\begin{tabular}{lrrr}
\toprule
Method & Macro \ndcg{} & Stage-2 ms & Total ms \\
\midrule
PF8-direct & 0.6133 & -- & 25.678 \\
PF32 + Uniform & 0.5056 & 10.550 & 34.490 \\
PF32 + Random & 0.2270 & 12.428 & 36.366 \\
PF32 + Token top-$k$ & 0.6324 & 15.832 & 39.776 \\
PF32 + BCR & 0.5091 & 30.736 & 54.671 \\
PF32 + Naive greedy & 0.6648 & 538.480 & 562.443 \\
\bottomrule
\end{tabular}}
\end{table}

\begin{table*}[t]
\centering
\caption{Latency sensitivity across candidate-pool sizes on the three-dataset latency subset. Each cell reports macro nDCG@5 / total median milliseconds. Token top-$k$ remains on the low-latency frontier as $K$ grows.}
\label{tab:ksweep}
\scalebox{1.0}{
\begin{tabular}{lccc}
\toprule
Method & $K=20$ & $K=50$ & $K=100$ \\
\midrule
PF8-direct & 0.6133 / 26.301 & 0.6133 / 25.981 & 0.6133 / 25.732 \\
PF32 + Uniform & 0.4963 / 26.873 & 0.5055 / 29.584 & 0.5056 / 34.430 \\
PF32 + Random & 0.2646 / 27.277 & 0.2414 / 30.593 & 0.2270 / 36.301 \\
PF32 + Token top-$k$ & 0.5842 / 27.958 & 0.6100 / 32.251 & 0.6324 / 39.542 \\
PF32 + BCR & 0.4878 / 31.101 & 0.4953 / 39.767 & 0.5091 / 54.384 \\
PF32 + Greedy & 0.6083 / 139.259 & 0.6397 / 303.635 & 0.6648 / 561.438 \\
\bottomrule
\end{tabular}}
\end{table*}

The latency sweep in Fig.~\ref{fig:overview} and Table~\ref{tab:ksweep} further shows that token top-$k$ remains on the low-latency frontier as $K$ varies, while naive greedy moves to much higher latency but also higher quality. The result does not suggest replacing interactive retrieval with naive greedy. Instead, it shows that query-aware token budgeting offers a useful family of operating points, complementing engine-level optimizations such as PLAID and EMVB \cite{santhanam2022plaid,nardini2024emvb}.

\section{Discussion}

Taken together, the experiments support three lessons.

First, static pooling is useful for candidate generation but insufficient as the final representation under aggressive budgets. PF32 indexing is attractive because it stores roughly 11 tokens per page, yet its standalone score is only 0.4738. Revisiting the shortlisted pages with a modest query-aware budget recovers much of this lost quality. The result complements static-compression methods \cite{clavie2024pooling,acquavia2023pruning,lassance2021studytoken,hofstatter2022colberter}: compression strengthens the first stage, while query-aware selection addresses the resulting final-ranking ceiling.

Second, the key resource is not token count alone, but coverage of the query. Uniform reopening reveals the value of broad coverage; BCR adds coarse query guidance; token top-$k$ introduces token-level affinity; and greedy adds explicit redundancy control. Their monotonic ordering in the main experiment follows the coverage view and mirrors a long-standing insight from summarization and diversification: relevance is most effective when balanced with novelty \cite{carbonell1998mmr,lin2011submodular}.

Third, the preferred method depends on the latency requirement. Naive greedy is not the interactive deployment policy; instead, it defines a principled quality envelope and exposes the cost of ignoring redundancy. Token top-$k$ is less theoretically complete but substantially more practical, while still recovering 93.93 percent of PF1 under the main ten-dataset protocol.

Together, these observations suggest that compression and selection should be treated as complementary decisions rather than interchangeable operations. Compression determines the representation that can be searched economically across the corpus, whereas selection determines how the available fine-grained evidence is used once the candidate space has narrowed. Conflating the two forces a single representation to satisfy competing requirements: it must be compact enough for large-scale search yet detailed enough for every possible query. The two-stage formulation avoids this tension by allowing each representation to serve a distinct role, with query-aware allocation connecting efficient candidate generation to evidence-sensitive final ranking.

\subsection{Theoretical bound versus empirical recovery}

Proposition~\ref{prop:1} guarantees $F_Q(S_{\text{greedy}})\geq(1-1/e)\cdot F_Q(S^*)$ for the optimal
size-$B$ subset $S^*$, a worst-case bound of approximately 63\%. The reported 98.39\%
\textit{nDCG@5} recovery against the PF1 full-token baseline is not an instance-specific
approximation ratio for this objective.

These quantities are not directly comparable: the guarantee concerns the clipped selection
objective relative to the optimal size-$B$ subset $S^*$, whereas 98.39\% is final nDCG recovery
relative to PF1. Within $F_Q$, the gap between a size-$B$ subset and the full token set depends
on the marginal coverage of the $n-B$ excluded tokens. In our evaluated setting, once the
dominant query aspects are covered, the remaining tokens add little useful evidence, so the
final ranking can remain close to PF1. This is an empirical explanation, not an nDCG guarantee.

Moreover, the adversarial instances that force the $(1-1/e)$ tightness require that every greedy
selection maximally duplicates existing coverage. Visual page token sets contain many
near-duplicate embeddings arising from background patches, whitespace, and repeated visual
motifs; early greedy selections therefore cover distinct query aspects without duplication,
and coverage loss per step is far below the worst-case allowance. These observations explain
why a conservative worst-case guarantee can coexist with high empirical recovery without
transferring the guarantee to nDCG.

\vspace{-0.02em}
\subsection{Reproducibility}
All benchmark tasks are public ViDoRe page-retrieval tasks. The paper specifies the retriever, pool factors, candidate-pool sizes, per-page budgets, BCR coverage setting, evaluation splits, latency units, and aggregate values used in every figure and table. Algorithms~\ref{alg:greedy} and~\ref{alg:topk}, together with the policy descriptions above, define the evaluated selectors; the timing scope is stated in Section~\ref{sec:budgets}.

\subsection{Limitations}
This study focuses on ColModernVBERT and ViDoRe-style page retrieval. The qualitative mechanism should apply to other multi-vector retrievers, including ColBERT-family retrievers and engines like PLAID, EMVB, and XTR \cite{santhanam2022plaid,nardini2024emvb,lee2023xtr}, but cross-backbone validation remains important. The two-tier design reduces the hot-path index, not total storage: it retains the full-token store, adds the PF32 index, and requires unbenchmarked shortlist I/O. The naive greedy implementation is computationally expensive; unmeasured directions include lazy greedy with marginal-gain upper bounds \cite{minoux1978accelerated}, candidate-pruned or vectorized greedy, and distillation into fast token policies. Our budget is fixed per page. Adaptive budgets could spend fewer tokens on easy candidates and more on visually dense or ambiguous pages.

\section{Conclusion}

Late-interaction visual document retrieval is effective because it preserves fine-grained page evidence, but storing and scoring that evidence is expensive. Our results show that this cost cannot be addressed by query-agnostic pooling alone. A compressed index can generate candidates efficiently, while query-aware allocation over the original tokens provides stronger evidence for final ranking. The MaxSim coverage formulation explains both why naive greedy recovers near-full quality as a reference policy and why token top-$k$ offers the practical low-latency choice. Across ten ViDoRe tasks, naive greedy recovers 98.39 percent of the PF1 score under a PF8-equivalent reranking budget, while token top-$k$ defines the deployable latency--quality frontier. Efficient late interaction is therefore not only a matter of storing fewer tokens, but of selecting the right tokens for each query.

\clearpage
\balance
\bibliographystyle{IEEEtran}
\bibliography{references}

\end{document}